\newtheorem{theorem}{Theorem}
\theoremstyle{definition}
\def\BibTeX{{\rm B\kern-.05em{\sc i\kern-.025em b}\kern-.08em
    T\kern-.1667em\lower.7ex\hbox{E}\kern-.125emX}}
\newcommand{\TSPL}{\mbox{TSPL}}
\newcommand{\cut}[1]{{}}
\title{Croesus: Multi-Stage Processing and Transactions for Video-Analytics in Edge-Cloud Systems
%%%% Cite as
%%%% Update your official citation here when published 
%\thanks{} 
}
\author[1]{Samaa Gazzaz}
\author[2]{Vishal Chakraborty}
\author[2]{Faisal Nawab}
\affil[1]{\footnotesize University of California, Santa Cruz}
\affil[2]{\footnotesize University of California, Irvine}
\begin{document}
\maketitle

\begin{abstract}
%% look here
Emerging edge applications require both a fast response latency and complex processing. This is infeasible without expensive hardware that can process complex operations---such as object detection---within a short time. Many approach this problem by addressing the complexity of the models---via model compression, pruning and quantization---or compressing the input. In this paper, we propose a different perspective when addressing the performance challenges. Croesus is a multi-stage approach to edge-cloud systems that provides the ability to find the balance between accuracy and performance.
%% end
Croesus consists of two stages (\textcolor{black}{that can be generalized to multiple stages}): an initial and a final stage. The initial stage performs the computation in real-time using approximate/best-effort computation at the edge. The final stage performs the full computation at the cloud, and uses the results to correct any errors made at the initial stage. In this paper, we demonstrate the implications of such an approach on a video analytics use-case and show how multi-stage processing yields a better balance between accuracy and performance. Moreover, we study the safety of multi-stage transactions via two proposals: multi-stage serializability (MS-SR) and multi-stage invariant confluence with Apologies (MS-IA).
\end{abstract}

% update index terms
\keywords{
multi-stage transaction, object detection, performance, accuracy}

\section{Introduction}
%% look here
%Emerging edge applications are pushing the boundary of performance by requiring low-latency complex computation on commodity hardware. Object detection is the basis for emerging applications in Virtual and Augmented Reality (V/AR) such as mobile gaming and video-based social media as well as smart surveillance and monitoring. 
Modern object detection models
are based on complex Convolutional Neural Networks~(CNN) that require
GPU clusters costing tens of thousands of dollars to perform object
detection in real-time~\cite{noscope,wu2019fbnet,he2018amc,cai2019once}. %% add other expensive solutions
This is infeasible for edge applications that require real-time processing
but cannot afford to place expensive hardware at the edge.
Furthermore, many of these applications require response in the
scale of milliseconds (such as V/AR~\cite{lincoln2016motion} and smart city
Vehicle-to-Everything~\cite{chen2017vehicle}). This prohibits the use of faraway cloud resources.
%that are typically hundreds of milliseconds to seconds away.

%Researchers approach this problem from different perspectives.
There is a large body of research in the machine learning community that aims at addressing the trade-off between accuracy and performance in deep learning (DL) models by utilizing compression, pruning and quantization techniques ~\cite{wu2019fbnet,he2018amc,han2019deep,cai2019once,kim2019efficient,luo2017thinet,ullrich2017soft,chen2018constraint,xu2018deep,Choi2020universal,dubey2018coreset}. %%compression papers
%
%%%%%%%%%%%%%%%%%%%%%%%%% important motivation paragraph
%
In these approaches, we notice a trade-off between accuracy and performance. The accuracy of a compressed model is typically lower compared to the full model while performance is improved dramatically. For example, in~\cite{wu2019fbnet}, the compressed model improves latency from 23.1 ms to 2.9 ms, while lowering the accuracy from $74.1\%$ to $50.2\%$. 
%
%%FBNet-B achieves 74.1% top-1 accuracy on ImageNet with 295M FLOPs and 23.1 ms latency on a Samsung S8 phone, 2.4x smaller and 1.5x faster than MobileNetV2-1.3[17] with similar accuracy
%%The smallest FBNet achieves 50.2% accuracy and 2.9 ms latency (345 frames per second) on a Samsung S8. 
%
Other papers in the field of image compression aid in reducing the amount of time needed to process data~\cite{8305033,Liusanfran,8476610}. %%data compression papers
other researchers opt to specializing DL models for certain use cases to improve performance~\cite{julian2019deep,8354254,lawhern2018eegnet,guo2021compact}. %%use case papers

An important aspect that is overlooked in many video analytics solutions is that they are not integrated with the system's data processing and management. Video analytics generates insights from videos that would typically be used in a data management application. For example, detecting objects in V/AR might feed into a mobile game, immersive social network, or other application. 
% end
We propose Croesus, a multi-stage edge-cloud video processing framework that aims to manage the performance-accuracy trade-off in DL models. \color{black}The framework consists of an edge-cloud video analytics component and a transaction processing component. Each component may exist in isolation of the other and benefit other use cases, however, they are co-designed to achieve the goals of data management for video analytics applications. \color{black}
This proposal separates computation into two stages: an initial stage that depends on best-effort computations at the edge (using a fast but less accurate DL model), and a final stage at the cloud to correct any errors incurred in the initial stage (using the accurate but slower DL model.) 
For example, for object detection in applications such as V/AR, instead of depending solely on the full CNN model, a more compact model is used at the edge to respond immediately to users. If needed, some frames are sent to the full CNN model on the cloud to detect any errors on the immediate responses sent by the initial stage. If an error is detected, then a correction process is performed in the final stage. The mechanism to correct errors is an application-specific task and our method allows flexibility in how errors are corrected.
The advantage of this model is that users have the illusion of both a fast and accurate object detection. The downside is the possibility of short-term errors. 
This pattern of the multi-stage model is useful for applications that require fast response but where the full model cannot be used within the desired time restrictions.

{\color{black}We formalize and analyze the transactions (a transaction is a group of database read/write operations that represents a task or a program) in Croesus using a formal \emph{multi-stage transaction} model.} Our model divides transactions into two sections: an initial and a final sections {\textcolor{black}{(we also show how this model can be extended to multiple sections)}}. The initial section is responsible for updating the system using the results of the initial object detection stage, and the final transaction is responsible for finalizing/correcting state using the results of the final (object detection) stage. \color{black}The multi-stage transaction model can be generalised to have more than two stages. However, our analysis with the general design turned out to add additional overhead without providing a significant benefit for edge-cloud video analytics. The reason is that the asymmetry in edge-cloud systems is two-fold: in the edge (low-capability, real-time requirement) and in the cloud (high-capability, less stringent latency requirement).\color{black} 

The multi-stage transaction model leads to challenges when reasoning about the correctness guarantees that should be provided to users. This is because the multi-stage transaction model breaks a fundamental assumption in existing transaction models, which is the assumption that a transaction is a single program or block of code. Therefore, there are challenges on coming up with an abstraction of initial and final sections and how they interact. Also, there is a need to specify what makes an execution of initial and final sections correct in the presence of concurrent transactions. We cannot reuse existing correctness criteria---such as serializability~\cite{bernstein1987concurrency}---as they would not apply to the multi-stage transaction model.

For those reasons, we propose a multi-stage transaction processing
protocol and study the safety-performance trade-offs in multi-stage transactions. We investigate two safety guarantees:
\emph{(1)~Multi-stage Serializability (MS-SR)}, which mimics the
safety principles of serializability~\cite{bernstein1987concurrency} by requiring that each transaction would be isolated from all other transactions. % This leads to two main challenges.
\emph{(2)~Multi-stage Invariant
Confluence with Apologies (MS-IA)}, which adapts invariant confluence~\cite{bailis2014coordination} and apologies~\cite{DBLP:conf/cidr/HellandC09} to the multi-stage transaction model and enjoys better performance characteristics and flexibility compared to MS-SR.
The multi-stage transaction pattern of Croesus invites a natural method of adapting invariant confluence and apologies. In particular, the final section is---by design---intended to fix any errors caused by the initial stage. This can be viewed as the final stage ``correcting any invariant violations'' and issuing ``apologies'' for any erroneous work generated by the initial section.

In the rest of this paper, we present background in
Section~\ref{sec:background}, followed by the design of Croesus (Section~\ref{sec:video}) and multi-stage transactions (Section~\ref{sec:safety}). 
Experiments and related work are presented in Sections~\ref{sec:eval}
and~\ref{sec:related}, respectively. 
The paper concludes in
Section~\ref{sec:conclusion}.

\begin{figure}
\centering
\includegraphics[scale=0.40]{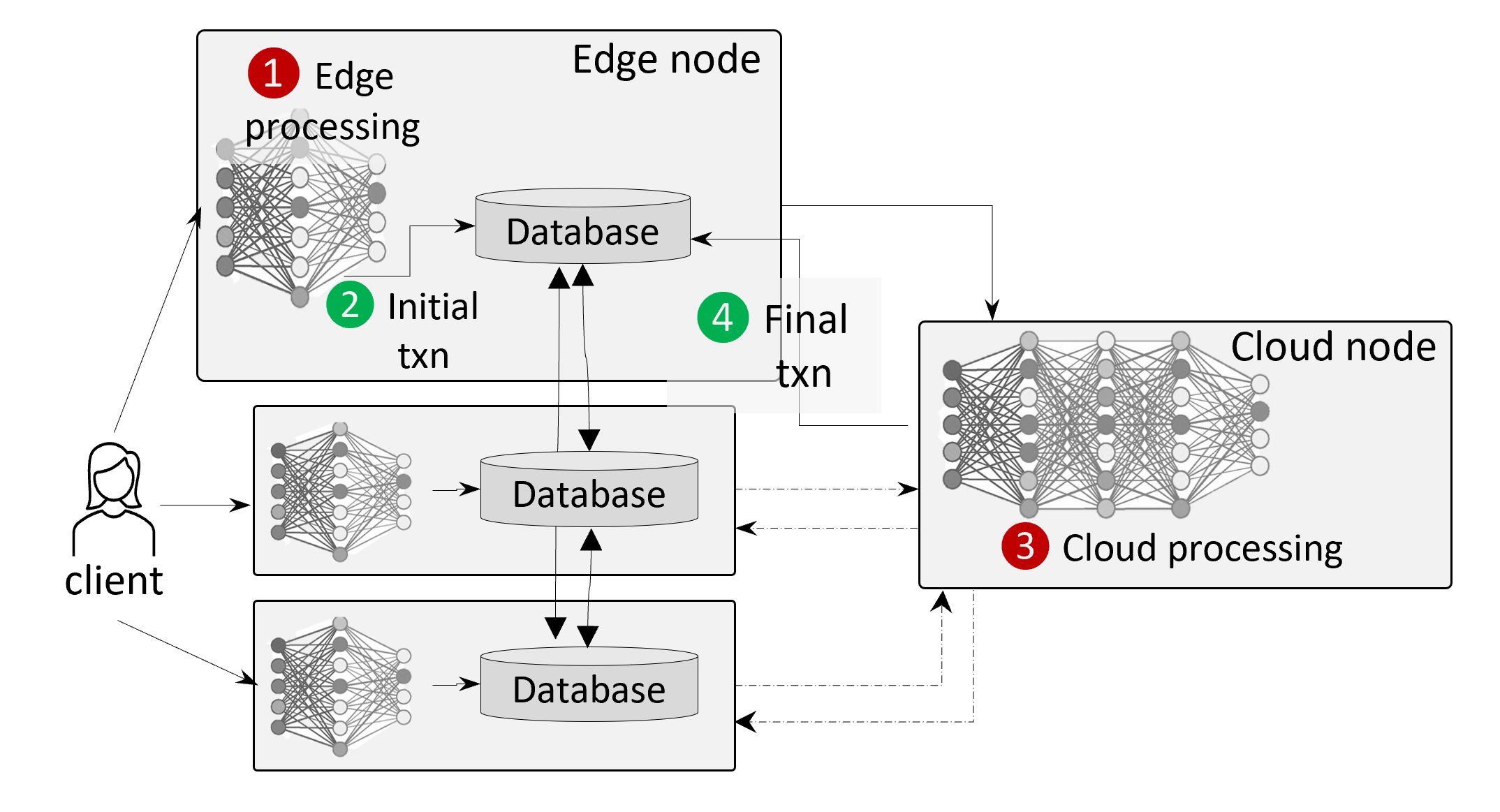}
\vspace{-0.2in}
\caption{\textcolor{black}{Croesus' execution pattern}}
\label{fig:exeptrn}
\vspace{-0.15in}
\end{figure}

\section{Background}
\label{sec:background}

In this section, we present background on the multi-stage system model and object detection.

\subsection{System and Programming Model}
\label{sub:model}

\textbf{Edge-Cloud Model.}
{\color{black}Our proposed system model consists of edge nodes and a cloud node (see Figure~\ref{fig:exeptrn}). 
Each edge node maintains the state of a partition (database transactions are performed on the partition copy at the edge.) For ease of exposition, we focus on a single edge node and partition in this paper. In edge applications, interactions between users tend to have spatial locality and are therefore typically homed in the same edge node and partition.}

\textbf{Application Model.}
The applications we consider are video-driven---meaning that the input and triggers to operations on data are done via a video interface. For example, a gesture or object detected on a V/AR headset triggers a database transaction. This translates to the following processing steps for each frame $f$: (1)~the frame $f$ is processed using the small model on the edge node, $M_e$, to generate labels (labels are the detected objects and/or actions). We call these the edge labels and are denoted by a set $L_e$. (2)~the edge labels $L_e$ are used to trigger transactions that take the labels as input. These transactions are denoted by the set $T_f$. The initial sections of each of these transactions in $T_f$ are processed to return an immediate response to users and potentially write to the database on the edge node. (3) concurrently, the frame $f$ is also processed in the original, more accurate object detection model on the cloud, denoted by $M_c$. Once the cloud model generates the labels, denoted by $L_c$, they are sent to the edge node. (4)~when the labels $L_c$ from the cloud are received, they are used to trigger two types of events. The first is to trigger the final sections of the transactions $T_f$ that started for frame $f$. The input to these sections is the correct label(s) of the object(s) that triggered the transaction. The second is to trigger new transactions that should have been triggered by the frame but their labels where missing in $L_e$. 
We focus on the first pattern as the second pattern can be viewed as a subset of the first. 

\textbf{Example Application.}
Consider a smart campus Augmented Reality (AR)
application with two basic functionalities: (1)~Task 1: continuously, an
object detection CNN model detects buildings in the campus. If a
building is detected, the database is queried and information
about the building---such as available study rooms---is augmented
onto the headset view. (2)~Task 2: if the user clicks on
an auxiliary device, a study room is reserved in the currently
detected building. 

\textbf{Execution Pattern.}
The execution pattern of this application is the following (shown in Figure~\ref{fig:exeptrn}):
The headset captures images continuously and sends them
to the nearby edge node. The edge node performs the initial stage of
computation by running the captured frame, $f$ on the small (fast but
inaccurate) DL model, $M_e$ (step 1). The labels extracted from the model, $L_e$, are used
to trigger the initial section of transaction $T_f$ (step 2). For example, if the engineering building is
detected, then the transaction's initial section reads information about the building.
The outcome of this transaction is sent back to the headset to be
rendered and augmented onto the display. During this time, the frame
is forwarded to the cloud node which runs the full (slow but
accurate) CNN model, $M_c$ (step 3). The labels, $L_c$ extracted from the model are sent
back to the edge node. Once the edge node receives the correct
labels, it performs the final stage of the transactions in $T_f$ (step 4). The final
stage takes as input both the original detected labels in the initial
stage as well as the new, correct, labels.  

\textbf{Programming Interface.}
The programming model exposes an interface to write both the initial
and final sections of the transaction. In our application for example,
there are two transactions, one for each task.
%%%
For task 1 (display
information about detected buildings), the initial section is
triggered for each frame with a label in the class ``building'' and it takes as
input the detected labels, $L_e$.
For each detected label, the initial section
reads the information about that key from the database and returns it
to the headset to be rendered.
The final section is triggered after the correct labels, $L_c$, are sent from
the cloud node. It checks if the labels are the same; if they are,
the transaction terminates (note that the decision to terminate is specific to this example transaction, but other application might use the final section to perform some final actions even if the labels were correctly detected in the initial stage.) If they are not, then the transaction
reads the labels of the correct detected building and sends them to the
headset to render the correct information and an apology.
\footnote{In a real application, the corrected information would also influence the small model---via retraining and heuristics such as smoothing---so that the error would not be incurred in the following frames.}.

For task 2 (reserve a study room), the initial section is triggered
when the auxiliary device is clicked by the user. The initial
section takes as input the most recent detected labels and their
coordinates. If there are more than one label, the initial section
picks the label that is closest to the center of the frame. Then, the
initial section reserves a study room if one exists.
The final section---triggered after receiving the correct
labels---checks if the center-most label matches the building where
the study room was reserved. If so, the transaction terminates.
Otherwise, the original reservation is removed from the database
and---if available---a new reservation with the right building is
made. The results are sent back to the AR headset to be rendered with
an apology. 

\subsection{Accuracy-Performance Trade-off in Object Detection}

\textbf{Convolutional Neural Networks (CNNs).}
A CNN is designed and trained to detect labels of objects in an input frame. Different CNN models have different structures and variations, and we refer the interested reader to these surveys~\cite{khan2020survey, sindagi2018survey}.  Our work applies to a wide-range of CNN models as we use them as a black box. 

\textbf{Accuracy-Performance Trade-off.}
The complex processing of CNNs result in higher inference time. 
It is estimated that running a state-of-the-art CNN model in real-time requires a cluster of GPUs that costs tens of thousands of dollars~\cite{noscope}. This means that running a CNN model on commodity hardware---such as what is used in edge devices---would lead to prohibitively high latency. 
This led to exploring the accuracy-performance trade-off in CNN models. Specifically, there has been efforts to produce smaller CNN models that would run faster on commodity hardware~\cite{yolo9000,lawhern2018eegnet,wen2019memristor,zhang2020compact,racki2018compact,xu2019accurate}. The downside of these solutions is that they are less accurate than full CNN models. In this work, we aim to utilize both small and full CNN models by using small models for fast inference and original models to correct any errors.

\textbf{Derivative Models.}
The interest in the accuracy-performance trade-off in CNNs led to efforts that enable deriving smaller---faster---models using existing original CNN models. One approach is to use a smaller model that handles the same scope of labels of the original model but with less accuracy~\cite{yolo9000}. Another approach is to create smaller---specialized---models that narrow the scope of labels to enable faster inference while retaining accuracy for the select labels~\cite{noscope}.
In our work, we consider both variations. For smaller, less accurate models, the Croesus pipeline helps correct errors due to inaccuracy and for specialized models, the Croesus pipeline helps correct errors due to the narrower scope of labels.

\section{Croesus Design}
\label{sec:video}

In this section, we present the design of Croesus and an optimization that controls the accuracy-performance trade-off.

\subsection{Overview}

\textbf{System Model.}
The system model of Croesus (Section~\ref{sec:background}) consists of an edge node and a cloud node.
The edge node hosts a small CNN model denoted by $M_e$ that is used to perform initial processing. The edge node also hosts the main copy of it's partition's data. The edge node processes both the initial section and the final section. The initial section of a transaction is triggered by the labels of the model on the edge, $M_e$, and the final section is triggered by the labels of the model on the cloud, $M_c$.
The execution pattern of requests is shown in Figure~\ref{fig:exeptrn} and described in Section~\ref{sub:model}.

\textbf{Workflow.}
The workflow of requests in Croesus is the following: a frame $f$ is sent from the client to the edge node. The edge node processes $f$ using the edge model, $M_e$. The labels from $M_e$, $L_e$, are used to trigger corresponding transactions, $T_f$ (the programmer defines what transactions should be triggered for each class of labels.) The initial sections of transactions in $T_f$ are processed on the edge node. At this time, the response from the initial sections are sent to the client. This marks the initial commit stage. In the meantime, the frame $f$ is sent to the cloud node. Once the cloud node receives it, the cloud model, $M_c$, is used to process $f$. The corresponding labels, $L_c$, are then sent to the edge node. When the edge node receives the cloud labels $L_c$, the final sections of transactions in $T_f$ are triggered. The responses and apologies from these final sections are sent to the client. This marks the final commit stage.

\textbf{Bandwidth Thresholding.}
The pattern of edge-cloud stages introduces a bandwidth overhead due to the need to send all frames from the edge to the cloud. This can be problematic due to the high overhead on the edge device and the monetary cost of communicating data to the cloud.
(e.g., some public cloud providers charge a cost for communicated data between the data center and the Internet).
To this end, we tackle the problem of limiting edge-to-cloud communication. We use the confidence of the labels that are generated by the edge model, $M_e$, to decide whether we need to send the frame to the cloud or not. 
Specifically, if the edge model's confidence is high enough, this is an indication that the detected labels are more reliable than other detections that have less corresponding confidence. Later in this section, we develop a bandwidth thresholding mechanism to investigate sending frames to the cloud selectively using the edge model's confidence.

\subsection{Initial-Final Section Interaction}
\label{sub:init-final}
A unique property of multi-stage processing is that there are two stages of processing where the first stage is fast and less accurate and the second is slow and accurate. This property leads to the need to understand how they interact and what guarantees should be associated with each stage. In the rest of this section, we provide such properties that are useful to programmers in the multi-stage model.
In the initial stage, the initial section of a transaction, $s_i$, uses the input from the edge model, $M_e$, to generate a response to the user. This response represents an \emph{initial-stage commit}. The initial-stage commit---when received by a client---represents the following: (1)~the response is a preliminary and/or best-effort result of the transaction. (2)~any errors in this initial processing will be corrected by the logic specified by the programmer in the corresponding final section. This second property is critical because it leads to having to enforce a guarantee that if the initial section of a transaction returns a response to the client (an initial-stage commit), then the underlying system must guarantee that the corresponding final section would commit as well. 
This is trivial for a transaction running by itself, however, when transactions are running concurrently, this leads to complications. (In Section~\ref{sec:safety}, we present the concurrency control mechanisms for multi-stage transactions where we encounter these complications.) 

When the final section of the transaction starts, it is anticipated for the final section to observe what the input labels were to the initial section---to know whether the input was erroneous---and what the initial section did---to know what to fix if an error was detected. To avoid adding complexity to the system model and description, we consider that these two tasks are performed by the programmer using database reads and writes. Specifically, the initial section communicates to the final section via writing its input and state to the database.

\subsection{Algorithms}

Now, we provide the detailed algorithms of Croesus. 
Parts of the algorithms use a concurrency control component that we present and design in Section~\ref{sec:safety}. We will denote this concurrency control component as \textsf{CC} and a transaction block would either be \textsf{CC.initial\{ \}} for an initial section and \textsf{CC.final\{ \}} for a final section. Both transaction blocks get the detected labels as input, but we omit it for brevity. 

\subsubsection{Client Interface}
The client captures frames, gets user input (from auxiliary devices), and displays responses. 
For example, in a V/AR application, the client captures a frame from the headset camera and sends it to the edge node. Likewise, if there are any associated auxiliary or wearable devices, the client sends the input/commands that correspond to these devices. This process of sending frames and input is continuous---there is no blocking to get the response from the edge node. When a response is received from the edge node, that response is rendered and augmented in the user's view. 

\subsubsection{Edge Node Algorithms}
The edge node is responsible for the initial stage of processing (using the small model $M_e$), transaction processing, and storage. 
There are two main components in the edge node: the input processing component and the transaction processing component. 
The following is a description of the main tasks that are handled by the edge node.

\textbf{Initialization and Setup.}
Starting an edge node includes setting up a small model, $M_e$, a data store $ds$, and a \emph{transactions bank}. The small model $M_e$ is the one that will be used to process incoming frames. The transaction bank is a data structure that maintains the application transactions and what triggers each transaction. For example, an application may have a transaction $t_{bldng}$ that reads the information about a building that is detected in a frame. The transaction $t_{bldng}$ takes as input the label that is associated with a building. The transactions bank helps the edge node know which transactions should be triggered in response to a label. 
For example, if a label $l_1$ represents a label name ``Engineering Building'' and label $l_2$ represents a label name ``University Shuttle 42'', the transaction $t_{bldng}$ should be triggered in response to $l_1$ but not $l_2$.

The way the transactions bank helps in making this decision is that it maintains a table, where each row corresponds to a class of labels and the transactions that would be triggered from that class of labels. For example, a row in that table can have a class of labels called ``Buildings'' and it contains all the labels that would correspond to a building. That row would also have $t_{bldng}$ and any other transactions that should be triggered in response to the ``Building'' class. A row in the transactions bank may also have other associated triggers, For example, a transaction $t_{rsrv}$ that is used to reserve a study room in a building would be triggered if both a building label is detected in the frame \emph{and} the auxiliary device input is received.

\textbf{Input and Initial Stage Processing.}
The initial stage processing represents the input processing using the small model, $M_e$, in response to a received frame or user input. When a frame $f$ is received by the edge node, it is supplied to the small model $M_e$. The model $M_e$ returns a set of labels $L^f_e$. Each label, $L^f_e[i]$, consists of the the name of the label, $L^f_e[i].name$, the confidence of the label, $L^f_e[i].confidence$, and the coordinates of the label, $L^f_e[i].coordinates$. The input processing component removes any labels from the set $L^f_e$ that have low confidence (the threshold for a low confidence is a configuration parameter.)  Finally, the input processing component gets the information of all the transactions that correspond to the detected labels, $L^f_e$, by reading from the transactions bank. The set of triggered transaction, $t_f$, is sent to the transaction processing component.

Similar to how frames trigger transaction, when a different input is received by the input processing component---such as a click on the auxiliary device---the input processing component generates the set of transactions $t_e$ that corresponds to the input. An auxiliary input might lead to an action that is independent from the captured frame. For example, a click on the menu button may display the menu and general user information. In this case, the entry in the transactions bank is only specified by the input type. Alternatively, the input might be coupled with a specific label class to trigger a transaction. For example, a click would display a captured building's information using $t_{rsrv}$. In such a case, $t_{rsrv}$ would only be triggered if both the click and a building label are detected. To facilitate such actions, the input processing component matches a received auxiliary input with the labels from the most recently detected labels. 

After transactions, $t_f$, are sent to the transaction processing component (TPC), the frame $f$ is sent to the cloud node to be processed using the cloud model, $M_c$. This concludes the tasks performed for input processing.

\textbf{Initial Transaction Section.}
% processing
When the input processing component generates the set $t_f$ for a frame $f$, these transactions are sent to the TPC. The TPC then triggers the initial section of these transactions. The read and write operations to the database are managed by the concurrency control component by wrapping them in the \textsf{CC.initial\{ \}} block. (The implementation details of the concurrency control component are presented in Section~\ref{sec:safety}). The initial section of a transaction $t$ would either commit or abort---based on the decision of the concurrency controller. If the initial section aborts, then the abort decision is sent to the client. Otherwise, the response from the initial section is sent to the client, which represents the initial commit point for $t$. The TPC records the decision for the initial section with the labels, $L^f_e$, and waits until the corresponding labels are received from the cloud model. 

\textbf{Final Transaction Section.}
After processing the initial section, the TPC waits for the correct labels, $L^f_c$, from the cloud node. Once received, the following is performed for each label, $L^f_e[i]$ in $L^f_e$. The label $L^f_e[i]$ is matched with a label in $L^f_c$. The matching is performed by finding if the bounding box (represented by the x-y coordinates) of a label in $L^f_c$ overlaps with the bounding box of $L^f_e[i]$. The overlap does not need to be exact---if the label overlap in more than X\%, where X is a configuration parameter, then the two labels are considered overlapping. If there are more than two candidates in $L^f_c$ that overlap with $L^f_e[i]$, then the one with the bigger overlap is chosen. There are the following cases of matching the label $L^f_e[i]$ to a label in $L^f_c$: 
(1)~If an overlapping label cannot be found in $L^f_c$, then the label $L^f_e[i]$ is considered erroneous and the final section of the corresponding transaction is called with an empty label.
(2)~If there is a label in $L^f_c$ that overlaps with $L^f_e[i]$ and the label name is the same. In that case, the label $L^f_e[i]$ is considered correct and the final section of the corresponding transaction is called with the same label.
(3)~If there is a label in $L^f_c$ that overlaps with $L^f_e[i]$ and the label names are different. In that case, the label $L^f_e[i]$ is considered erroneous and the final section of the corresponding transaction is called with the overlapping label from $L^f_c$.

Once this matching process is complete, then the TPC checks if there are any labels in $L^f_c$ that were not matched. For each one of these labels $L^f_c[i]$, the TPC triggers an initial section and final section with the label in $L^f_c[i]$. 

\subsubsection{Cloud Node Algorithms}
The cloud node has a single task of processing frames using the cloud model, $M_c$. When a frame $f$ is received from an edge node, the labels, $L^f_c$, are derived using $M_c$ and then sent back to the edge node.

\subsection{Bandwidth Thresholding} 

A major problem faced by video-analytics applications in the edge-cloud paradigm is the high edge-cloud bandwidth consumption due to the large size of videos. Sending all frames from the edge to the cloud poses a performance challenge due to the communication overhead as well as a monetary overhead due to the cost of transferring data between the edge and the cloud (most public cloud providers charge applications for data communication between the cloud and the Internet). 
We extend our solution to reduce the reliance on cloud nodes with the goal of overcoming the performance overhead and monetary costs of edge-cloud communication.

The observation we utilize to reduce edge-cloud communication is that we can use the confidence of edge computation to decide whether verifying with the cloud node is necessary. (Confidence here represents the statistical confidence generated by CNN models which is a typical feature of such models.) Specifically, if the confidence of the produced detections in the edge model, $M_e$, is high, it is likely that the edge model produced correct labels. Therefore, it would not be necessary to send the frame to the cloud. Likewise, if the detections had extremely low confidence, then it is likely that these are erroneous, false detections, and thus sending the frame to the cloud node would be unnecessary as they can be discarded immediately. What is left are detections that have confidence values that are not too high and not too low. These detections are ones that likely indicate the presence of an object of interest, but its label might be incorrect.

More formally, we represent with $\theta_L$ and $\theta_U$ the lower and the upper confidence thresholds such that $0 \leq \theta_L < \theta_U < 1$. Generally, an object with confidence lower than $\theta_L$ is discarded as being likely a false-positive (this is called the \emph{discard interval}). An object with confidence higher than $\theta_U$ is assumed to be correct and is not sent to the cloud node (this is called the \emph{keep interval}). Objects with a confidence between $\theta_L$ and $\theta_U$ are sent to the cloud for validation (this is called the \emph{validate interval}).
However, there is a challenge in adopting this model as it is not clear how to derive these confidence thresholds to preserve the integrity of the underlying models. Specifically, a \emph{performance-accuracy trade-off} controls this decision. A large validate interval would lead to better accuracy, since more frames are sent to the cloud for validation and correction. Likewise, a small validate interval would lead to worse accuracy but better performance in terms of average latency and edge-cloud bandwidth utilization. This is complicated further because the size of the validate interval is not the only factor controlling this trade-off. The validate interval size may lead to different performance-accuracy trade-offs based on where it is located in the threshold space from 0--100\%.

%\begin{figure}
%\centering
%\includegraphics[scale=0.90]{figs/system design.png}
%\caption{Using confidence thresholds to decide whether frames are sent to the cloud node}
%\label{fig:sysdsgn}
%\vspace{-0.15in}
%\end{figure}

%
\textbf{Optimization Formulation.}
The input to the optimization problem is a set of video frames $V = \{v_1, \hdots, v_n \}$, and an object query $O$ (e.g., bus), which needs to be detected in the frames.
Let $n_i$ be the number of instances of object $O$ detected in frame $v_i$ 
(by the NN in the edge-node) with confidence $\beta_i = (\beta_i^1, \hdots, \beta_i^{n_i})$ where $\beta_i^k$ is the confidence corresponding to the
$k^{\text{th}}$ instance of object $O$, for $1 \leq k \leq n_i.$ We denote this as \emph{edge-confidence}. 
%Figure~\ref{fig:sysdsgn} shows the control flow of using the thresholds given the edge confidence. 

Let $m =~|\{v_i \in V \mid \exists k \text{ s.t. } \theta_L \leq \beta^k_i \leq \theta_U\} |$ be the number of frames which were sent to the cloud. We define the ratio $\delta(\theta_L, \theta_U) = \frac{m}{n}$ (where $n$ is the number of frames in $V$) and have the corresponding F-score $f({\theta_L, \theta_U}) = \frac{2pr}{p+r}$ where $p$ is precision and $r$ is recall. We want to find $(\theta_L, \theta_U)$ such that $\delta(\theta_L, \theta_U)$ is minimized and the corresponding $f ({\theta_L, \theta_U}) \geq \mu.$ Let $\mathbb{S} = \{x \in \mathbb{R} \mid  0 \leq x < 1 \}$. We have:
\begin{align}
    T = \underset{(x,y) \in \mathbb{S}^2, \mu}{\text{argthresh }} f(x,y) := \{ (x, y) \in \mathbb{S}^2 \mid f({x, y) \geq \mu}\} \label{eqn:T}
\end{align}
\begin{align}
    (\theta_L, \theta_U) &= \underset{(x,y) \in T}{\text{argmin }} \delta (x,y)  \nonumber
    \\&:=
    \{(x^*,y^*) \in T \mid \forall (x,y) \in \mathbb{S}^2, \delta(x^*, y^*) \leq \delta(x,y) \}. \label{eqn:Theta}
\end{align}
This formulation produces the thresholds $(\theta_L, \theta_U)$ given $\mu$.

{\color{black}
\subsection{Generalizing Multi-Stage Processing}

In this section, we have focused on models with two stages. This is because the application domain we consider has a two-tier symmetry that invites the use of two sections, one that represents the edge and another that represents the cloud. However, the multi-stage processing model can be utilized for other use cases where the asymmetry has more than two levels. Our designs and treatments can be extended to these cases as we describe in the rest of this section.

\textbf{Model.}
In a general multi-stage model, there are $m$ stages, $s_0,\ldots,s_{m-1}$. The first stage, $s_0$, represents the initial stage of processing and the last stage, $s_{m-1}$, represents the final stage of processing. All other stages are intermediate stages. The data storage is maintained by the node handling stage $s_0$. Each stage contains a video/image detection model---where typically the model at stage $s_i$ (denoted $m_i$) has better detection that model $m_j$, where $j<i$. A transaction consists of $m$ sections, each one ($t_i$) corresponding to a stage ($s_i$).

\textbf{Processing.}
When a frame $f$ is received, it is first sent to the initial stage, $s_0$. The initial stage processes $f$ using $m_0$ and takes the outcome of the model to process the first section of the transaction $t_0$. Then, the frame is processed at the next stage $s_1$---using $m_1$---and the outcome is used to trigger transaction $t_1$. This continues until the final stage.
If bandwidth thresholding is performed at any stage, then the sequence from initial to final stages might be broken. For example, if at stage $s_i$, the bandwidth thresholding algorithm (as presented earlier in the section) decides that the frame does not need to be forwarded to the next stage, then the sequence stops and the remaining transaction sections are performed.

}

%\vspace{-0.05in}
\section{Multi-Stage Transactions}
\label{sec:safety}

%In this section, we present multi-stage transactions.

\subsection{Multi-Stage Transaction Model}
% model definition more formally
We consider a new multi-stage transaction model where every transaction comprises of two distinct sections: the initial section
and the final section. Each section, $s$---in a transaction $t$---consists of
read ($r_t^s(x)$) and write ($w_t^s(x)$) operations in addition to
control operations to begin ($b_t^s$) and commit ($c_t^s$) each
section. For example, consider a multi-stage transaction $t$. The execution of the transaction would look like the
following: $b_t^i\ r_t^f(x)\ w_t^i(y)\ c_t^i\ b_t^f\ w_t^f(z)\  c_t^f$
where $i$ stands for the initial section and $f$ stands for the final section. 

If the initial section of a transaction commits (called initial commit), then the final section must begin and commit (called final commit) as well. When we say that a transaction $t$ in our model has committed, we mean that both sections of $t$ have committed. Furthermore, the final section of a transaction cannot begin before the initial section. The case for conflicts of transactions also demands special consideration. In our model, we say two transactions to be conflicting if there is at least one conflicting operation in either of the sections. The seemingly simple abstraction of splitting every transaction into two sections complicates the basic notions of the general transaction model. In the following, we take a look at safety and describe two notions of consistency in our model. 

\subsection{Safety}
% why safety is important and risks to safety due to concurrency
% goal to preserve the serial illusion
%Studying safety of multi-stage transactions is important to enable
%programmers to use the abstraction safely and efficiently. 
In the
absence of concurrent activity, safety is straight-forward; the
initial section is followed by the final section and both are
processed as the programmer expects. When concurrency---which is
important for performance---is introduced, it challenges the
programmer's notion of the sequentiality of running transactions and
multi-stage sections (other conflicting transactions may run within
and between a transaction's sections.)

\label{ex:multitxn_safety}
For example, consider an application where there are two transactions, $t_1$ and $t_2$, each of which increment the value of a data object $x$ by one. Suppose that, for each transaction, the initial stage consists of reading the value of $x$; the value is increased, and the new value is written in the final section. Therefore, if the two transactions executed concurrently and both $t_1$ and $t_2$ read the same value of $x$, then the final value of $x$ would only increase by one. This is an anomaly because there were two transactions that incremented the value of $x$ and the value of $x$ should have increased by two.

safety is different because it is also actions between sections not
only within a transaction.
safety here is also different than typical concurrency -- it is not
about conflicting copies to be merged, it is about a wrong trigger
or wrong input.
Evidently, multi-stage consistency adds to the complexity involved in
traditional consistency guarantees such as serializability in two
ways: (1)~multi-stage transactions consists of two separate stages.
This means that in addition to the concern of concurrent transactions
interleaving operations within each section, there is a need to
consider whether sections of transactions running \emph{between} the sections of other transactions should be permitted. 
(2)~in multi-stage transactions, inconsistency is not only due to
concurrent activity, but also due to erroneous
transactions that have an incorrect trigger or input (\emph{e.g.}, an
erroneously detected building in the edge stage of processing leads
to triggering the wrong transaction and/or supplying it with the
wrong input.) 

Due to these differences, we revisit transactional consistency in light of multi-stage transactions. We present and discuss two variants of multi-stage transaction consistency.
In both variants, we assume that traditional concurrency control mechanisms are used to ensure that each section is serializable relative to other transactions' sections. (This means that each section is atomic and isolated from other sections and that there is a total order on sections.) This leaves the novel challenge to safety that is introduced in our work, which is how these sections can be reordered relative to each other.

\vspace{-0.05in}

\subsection{Multi-Stage Serializability (MS-SR)}
% First variant: MS-SR
% definition and guarantees - including that final must be guaranteed
% why it would be really bad for performance in edge-cloud and limit
% on final must be guaranteed
In MS-SR, we mimic the safety principles of serializability, which is---informally---a guarantee that all transactions execute with the illusion of some serial order of all transactions~\cite{bernstein1987concurrency}. When trying to project this to multi-stage transactions, this translates to the requirement that all transactions are processed serially, where the final section of a transaction appears immediately after the initial section. This guarantee can be reduced to serializability by considering that the initial and final sections are part of the same serializable transaction. The main difference is that when the initial section commits, it is a guarantee that the final section would eventually commit---it cannot abort due to unresolved conflicts. As we will see in the rest of this section, this requirements complicates the processing of the initial section.

In order to specify MS-SR formally, we introduce some notations and state our assumptions.  We denote with $<_h$, the ordering relation on execution history of transaction sections. This relation represents the ordering relative to the commitment rather than the beginning of the section. For example, $s_a <_h s_b,$ denotes that the left-hand side is ordered before the right-hand side, i.e., section $s_a$ is ordered before section $s_b$. 
%We assume that every pair of conflicting sections is serializable, and the $<_h$ relation reflects the ordering in the serializable history of sections.

Consider two conflicting transactions $t_k$ and $t_j$ (i.e., they have at least one conflicting operation in either section), where $s_k^i$ have initially committed before $s_j^i$ initially committed.
MS-SR guarantees the following: (1) the final section of the first transaction, $s_k^f$ , must commit after $s_k^i$. This is the guarantee of multi-stage transactions to commit the initial section before the final section of the transaction. (2) $s_k^f$ must commit before $s_j^f$. This is due to the MS-SR guarantee that the two sections of the transaction must be ordered next to each other relative to other conflicting transactions. (3) $s_k^f$ must be ordered before $s_j^i$ only if there is a conflict between $s_k^f$ and $s_j^i$. This is also due to the need to serialize the sections of two conflicting transactions. The condition of the conflict between $s_k^f$ and $s_j^i$ is to capture that if the two sections do not conflict, then they can be reordered in the serializable history.
These conditions are represented by the following formulation, where (a) captures both conditions (1) and (2), and (b) captures condition (3): 
\begin{align*}
\text{MS-SR: }(a)& \exists t^s\  \left( s_k^i <_{h} s_j^i  \implies ( s_k^i <_h t^s <_h s_j^f \wedge t^s = s_k^f) \right)\\
            (b)&\text{if conflict in } s_k^f, s_j^i \implies s_k^f <_h s_j^i
\end{align*}

% 11/1/2021

We elaborate on Example \ref{ex:multitxn_safety} to demonstrate the need for MS-SR(a) and MS-SR(b).
As an example of MS-SR, consider the two transactions:
\begin{align*}
    \text{$t_k : b^i_{t_k} r^i_{t_k}(x) c^i_{t_k} b^f_{t_k} w^f_{t_k}(x) c^f_{t_k}$ and $t_j : b^i_{t_j} r^i_{t_j}(x) c^i_{t_j}
    b^f_{t_j} w^f_{t_j}(x) c^f_{t_j}$.}
\end{align*}
Further assume that $s^i_k <_h s^i_j.$ Condition MS-SR(a) above guarantees that $s^f_k$ is committed after $s^i_k$ and before $s^f_j$, i.e., we have 
$s^i_k <_h s^f_k <_h s^f_j.$
With MS-SR(a) alone, the following $s^{i}_k <_h s^{i}_j <_h s^{f}_k <_j s^f_j$ is permitted. However, because $s^f_k$ conflicts with  $s^i_j$, then the two sections must be ordered according to MS-SR(b) and the following ordering relations must be met: $s^{i}_k <_h s^{f}_k <_h s^{i}_j <_j s^f_j$. This ordering avoids the anomaly of both transactions reading the same value of $x$, but one overwriting the value written by the other.

Now, we present a protocol that guarantees MS-SR.

\textbf{Two Stage 2PL} (\TSPL): The Two Stage 2PL is the two phase locking protocol~\cite{BernsteinHG87} modified for our multi-stage transactional model (See Algorithm~\ref{algo:TS2PL}.) Let $t_k$ be a multi-stage transaction comprising of $t^i_k$ and $t^f_k$. 
%The initial section accesses data-items $x_1, \hdots , x_n$ and the final section accesses a subset of $x_1, \hdots x_{n}$ that were already accessed by the initial section as well as additional items $y_1, \hdots, y_p$ 
%(for the sake of example, these are disjoint from the read and write set of the initial section). 
First, the initial section starts executing, locking each accessed data item before reading or writing it. After the initial section finishes processing, the initial commitment cannot be performed immediately. This is because we need to guarantee that the final section can execute and commit as well, due to the requirement of multi-stage transactions. Therefore, the locks of all items that are accessed (or potentially accessed) by the final section must be acquired first. Then, the transaction enters the initial commit phase. Once all the needed input is available for the final section (\emph{e.g.}, the corrected labels from the cloud model), the final section executes, and the transaction enters the final commit phase. Finally, all the locks are released.

\begin{algorithm}[h!]
\SetAlgoLined
%\KwResult{Write here the result }
 items $\gets$ get\_rwsets($t^i_k$)\\
 \eIf{acquirelocks(items)}{
    execute($t^i_k$)\\
    items $\gets$ get\_rwsets($t^f_k$)\\
 \eIf{acquirelocks(items)}{
    Initial Commit\\
    execute($t^f_k$)\\
    Final Commit
    }{
    abort}
    }{
    abort}
 releaselocks(get\_rwsets($t^i_k$))\\
 releaselocks(get\_rwsets($t^f_k$))\\
 \caption{Two Stage 2PL}
 \label{algo:TS2PL}
\end{algorithm}

\begin{theorem}
%Let $t_1, t_2, \hdots, t_n$ be a transaction history. 
The \TSPL\  protocol satisfies MS-SR.
\end{theorem}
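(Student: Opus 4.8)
The plan is to reduce both clauses of MS-SR to a single structural fact about \TSPL: because the protocol acquires the locks for \emph{both} sections before performing the initial commit and releases none of them until after the final commit, each transaction holds every lock it ever needs across one contiguous interval spanning from before its initial commit to after its final commit. I would first dispatch the liveness obligation implicit in clause (a)---namely that $s_k^f$ actually commits, so that the witness $t^s = s_k^f$ exists. This is immediate from the algorithm: the branch performing \textsf{Initial Commit} is entered only after \textsf{acquirelocks} has succeeded on \textsf{get\_rwsets}$(t_k^f)$, so at the moment of initial commit the final section already holds all locks it will ever request. Hence \textsf{execute}$(t_k^f)$ and \textsf{Final Commit} proceed with no further lock acquisition and cannot abort for lack of locks, establishing that an initial commit forces a final commit.

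The core step is a lock-interval lemma. For a conflicting pair $t_k, t_j$, fix an item $x$ on which they have conflicting operations; since at least one operation is a write, the locks the two transactions take on $x$ are incompatible. Under \TSPL\ each transaction already holds its $x$-lock before its own initial commit (all read/write-set locks of both sections are acquired up front) and retains it until after its final commit (\textsf{releaselocks} runs only after \textsf{Final Commit}). Therefore the holding windows $[\,\text{before } s_k^i,\ \text{after } s_k^f\,]$ and $[\,\text{before } s_j^i,\ \text{after } s_j^f\,]$ cannot overlap, so one precedes the other in the total commit order $<_h$. The point worth stating explicitly is that this conclusion is independent of \emph{which} sections the conflicting operations belong to (initial--initial, initial--final, final--initial, or final--final), because \TSPL\ collapses both sections into a single lock-holding window.

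Given window disjointness, I would finish by case analysis under the hypothesis $s_k^i <_h s_j^i$. If $t_j$'s window preceded $t_k$'s, then $s_j^f <_h s_k^i$, whence $s_j^i <_h s_k^i$, contradicting $s_k^i <_h s_j^i$ by totality of $<_h$. Hence $t_k$'s window precedes $t_j$'s, yielding the chain $s_k^i <_h s_k^f <_h s_j^i <_h s_j^f$. Reading off this chain gives both clauses simultaneously: clause (a) holds with $t^s = s_k^f$ since $s_k^i <_h s_k^f <_h s_j^f$, and clause (b) holds since $s_k^f <_h s_j^i$, which covers exactly the case where $s_k^f$ and $s_j^i$ conflict.

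The main obstacle I anticipate is not the combinatorics but pinning down the lock semantics precisely enough that the disjoint-window claim is airtight: I must ensure that (i) conflicting operations genuinely force incompatible, non-shared locks so the windows cannot coexist, and (ii) the relation $<_h$ is applied to commit events consistently, so that ``window $A$ precedes window $B$'' can be equated with the corresponding ordering of the $s^i, s^f$ commit points. A secondary subtlety is handling transactions that conflict on several items at once; totality of $<_h$ guarantees these separate disjointness facts cannot disagree, but I would note this explicitly rather than leave it implicit.
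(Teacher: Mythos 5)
Your proof is correct, and it rests on the same structural fact as the paper's: under \TSPL\ every lock a transaction will ever need is acquired before its initial commit and released only after its final commit, so each transaction occupies one contiguous lock-holding window covering both sections. The difference is in the decomposition. The paper verifies the three MS-SR obligations one at a time --- ordering of $s^i$ before $s^f$ from the algorithm's control flow, ordering of the two final sections from the shared lock, and the $s_k^f <_h s_j^i$ condition by a separate contradiction argument about who acquired the lock on $o$ first. You instead prove a single disjoint-window lemma (incompatible locks on the conflicting item force the two windows apart) and read all three conditions off the resulting chain $s_k^i <_h s_k^f <_h s_j^i <_h s_j^f$. Your route is more unified and in fact yields a slightly stronger conclusion: $s_k^f <_h s_j^i$ holds for every conflicting pair, not only when $s_k^f$ and $s_j^i$ themselves conflict, which trivially discharges clause (b). You also make explicit two things the paper leaves implicit --- that an initial commit forces the final section to commit (since no further lock acquisition remains), and that conflicts on multiple items cannot produce inconsistent orderings. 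The one point worth tightening is the same one you flag yourself: the disjointness of the windows depends on the conflicting operations inducing incompatible (non-shared) lock modes, which holds because a conflict by definition involves at least one write; stating that explicitly closes the argument.
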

\begin{proof}
Consider a pair of conflicting transactions $t_p$ and $t_q$, where $t^i_p <_h t^i_q$.
Following Algorithm \ref{algo:TS2PL}, each section is serialized relative to each other section because locks are held before execution. Now, we show that the three conditions of MS-SR of ordering sections relative to each other are met. The first guarantee is ordering the initial section before the final section. The algorithm executes the initial section before the final section which guarantees their ordering.
The second guarantee that $t_p^f$ is ordered before $t_q^f$. There is at least one data object $o$ that both $t_p$ and $t_q$ access. Because the final section is only executed after all locks are held for the transaction (including the lock for $o$), $t_p^f$ would be processed before $t_q^f$.
The third guarantee is that if $t_p^f$ conflicts with $t_q^i$, then $t_p^f <_h t_q^i$. Assume that the conflict is on data object $o$. Assume to the contrary that $t_q^i <_h t_p^f$. If that's the case, this means that $t_q^i$ acquired the lock on $o$ before $t_p^f$ and before the point of initial commitment (because initial commitment only happens after acquiring all locks including the locks for the final section). Because the locks (including the one on $o$) are not released until $t_q$ finishes, this means that before the lock on $o$ is released, $t_q$ has initially committed. However, $t_p$ initially commits only after acquiring the lock on $o$, which means that $t^i_q <_h t^i_p$, which is a contradiction to our starting assumption that $t^i_p <_h t^i_q$.
\end{proof}

\textbf{Discussion.}
Although MS-SR is an easy-to-use consistency guarantee, it leads to
complications and undesirable performance characteristics. 
The main complication is due to the need to guarantee that committing the initial section would lead to committing the final section. With the stringent requirement that the two sections are serialized so that they appear to be back-to-back in the serialization order, this leads to having to ensure that the locks for the final section can be acquired. The design consequence as we see in the TS-2PL algorithm is that the initial section cannot commit before acquiring the locks of the final section. This leads to one of two consequences: (1)~the system can infer what data will be accessed (or potentially accessed) in the final section so that the locks can be acquired and the initial commit happens before having to wait for the cloud model to finish processing, or (2)~the transaction would not be able to initially commit until the cloud model returns the correct labels so that it is known what data items are going to be accessed. The first option may require complex analysis or input from the programmer and the second option is prohibitive as it means that the initial section has to wait for a potentially long time, which invalidate the goals of multi-stage transactions. 
Another complication is that the locks for the initial section must be held until the final section finishes processing which would lead to higher contention.

%In summary, although MS-SR allows an easy-to-use consistency definition for multi-stage transactions, it introduces restrictions on the type of operations that can be performed and suffers from fundamental performance limits in terms of lock contention.
%
%Next, we propose MS-IA to overcome these limitations.

\subsection{Multi-Stage Invariant Confluence with Apologies (MS-IA)}
% second: MS-IA
% what is the idea consistency variant?
% take advantage of the unique characteristic of initial and final
% sections. Use the final section as a way to enforce and correct
% invariants.
% -- it is not as simple as undoing stuff, if we just undo stuff,
% then we lose a lot of transactions unnecessarily --
% add apologies to the picture
Now, we propose a multi-stage safety criterion that is inspired from
invariant confluence~\cite{bailis2014coordination} and apologies~\cite{DBLP:conf/cidr/HellandC09}. The initial-final pattern of multi-stage transactions invites the utilization of these concepts as we discuss next.

\textbf{Guesses and Apologies.}
The concept of guesses and apologies~\cite{DBLP:conf/cidr/HellandC09} was introduced to describe a
pattern of programming that balances local action versus global
action (for example, a local action on a replica versus global action
on the state of all replicas in the context of eventual consistency).
In this pattern, a \emph{guess} is performed with local information
and, then, guesses are reconciled with the global state which would lead
to detecting inconsistencies in the local guesses. Such errors lead to
\emph{apologies} via undoing actions,
administrator intervention, and/or notifications to affected users. 

This pattern of guesses and apologies fits our multi-stage edge-cloud
transaction model. The
initial section represents the guess and the final section represents
the apology. To illustrate, consider an example of a multi-player AR
game with three players: $A$ with 50 tokens, $B$ with 10 tokens, and
$C$ and $D$ with no tokens. The application has a token transfer
function \textsf{transfer(from, to, amount)}. The initial section
performs the transfer, and the final section reconciles any
mistakes. Now, assume that the initial section of a transfer $t_1$ from $A$
to $B$ for 50 tokens took place. Then, the initial section of a
transfer $t_2$ from $B$ to $C$ for 10 tokens took place followed by another
transfer $t_3$ from $B$ to $C$ for 50 tokens. Due to concurrency, assume
that the final section of both $t_2$ and $t_3$ were performed and
that their trigger and inputs were correct. In this case, the final
section terminates for both transactions. Then, the final section of
$t_1$ starts. However, the correct input to $t_1$ turns out to be $D$ instead
of $B$ (for example, because the edge CNN model detected player $B$
when it is actually player $D$ as detected by the cloud CNN model.) 
An apology procedure in the final section could retract
the effects of $t_1$ and any other transactions that depended on it,
which are $t_2$ and $t_3$. 

Using guesses and apologies allows us to process the initial
sections of transactions fast while providing a mechanism to overcome
the mistakes of the edge best-effort computation. However, it may
lead to a cascade of retractions. To overcome this, we propose
combining the concept of apologies with invariant confluence as we
show next.

\textbf{Invariant Confluence.}
In invariant confluence, preserving the application-level invariants is
what constitutes a safe execution. In its original form, invariant
confluence is intended to reason about transactions mutating the
state of different copies of data~\cite{bailis2014coordination}. Our edge-cloud model is
different, involving mutating the state of one (edge) copy. However,
an inconsistency might be introduced by the initial section of a
transaction with erroneous trigger/input. Our insight is that we can
utilize the final (apology) section to act as the \emph{merge} function that
attempts to reconcile application-level invariants instead of all
potential inconsistencies. In a way, we are flipping the model of
invariant confluence systems from a pattern of \emph{check-then-apply}
(check if the operation can merge, and decide whether coordination is
necessary before doing the operation), to a pattern of
\emph{apply-then-check} (do the operation then check whether you can merge, and if you cannot merge, then perform an apology procedure and retract the initial section's effects.) 

\textbf{MS-IA programming pattern.}
This pattern, when combined with apologies, can lead to reducing the
negative consequences of erroneous triggers/inputs. Consider the
multi-player AR game application introduced above (when discussing
apologies). Assume that
the initial sections of $t_1$,
$t_2$, and $t_3$, were processed as well as the final sections of
$t_2$ and $t_3$. At this stage, $A$, $B$, and $D$ have no tokens and
$C$ has 60 tokens. When the error is discovered, it triggers the
final section of $t_1$. A programmer, equipped with the notions of
invariant confluence and apologies, writes the final section to attempt
to perform two tasks: (1)~retract the minimum amount of erroneous
actions and their effects using apologies, and (2)~retain as much
state as possible using invariant-preserving merge functions. The
specifics of this pattern depends on the application invariants. For
example, the final section of the transfer tasks could have the 
invariant that no player should
have less than 0 tokens. The final section of $t_1$ would retract the
50 tokens that were initially sent to $B$ and sends them to the
rightful recipient, player $D$. This means that $B$ could not have
sent a combined 60 tokens to $C$. The merge function can then decide
to retain the 10 tokens sent from $B$ to $C$, since they are not
affected by the error. But, it retracts the 50 tokens. This
retraction is accompanied by an apology that depends on the
application (\emph{e.g.}, a message is sent to both $B$ and $C$, with
a free game item.) 

In terms of the concurrency control guarantee that is needed for MS-IA, the initial section of a transaction must be ordered before its corresponding final section (in addition to our earlier assumption that each section is serialized relative to other transactions' sections). Formally, for an initial section, $s_k^i$, the following is true:
 \begin{center}
         MS-IA: $ \exists t^s\ \left( s_k^i <_h t^s \wedge t^s = s_k^f \right)$
     \end{center}

%We provide a few examples and how the two-phase transaction model can be used.

\begin{algorithm}[]
\SetAlgoLined
%\KwResult{Write here the result }
 items $\gets$ get\_rwsets($t^i_k$);\\
 \If{acquirelocks(items)}{
    execute($t^i_k$)}
 Initial Commit\\
 releaselocks(get\_rwsets($t^i_k$))\\

 items $\gets$ get\_rwsets($t^f_k$)\\
 \eIf{acquirelocks(items)}{
    execute($t^f_k$)}{
    abort}
 Final Commit\\
 releaselocks(get\_rwsets($t^f_k$))
 \caption{MS-IA Algorithm}
 \label{alg:msia}
\end{algorithm}

\textbf{Concurrency control.}
%The concurrency control algorithm to achieve the necessary isolation for MS-IA is shown in Algorithm~\ref{alg:msia}. 
The concurrency control algorithm starts by acquiring all the locks for the initial section, then processing the initial section. When the processing of the initial section is done, the locks are released. Then, when the final section is ready to start, the corresponding locks are acquired before processing the final section. Finally, the locks for the final section are released. Note here that unlike the algorithm for MS-SR, we did not hold the locks for the initial section until the end of the final section and we reach the point of initial-commit immediately after processing the initial section without having to wait to lock or coordinate the final section. The reason for this is that the logic for invariant checking and apologies is embedded in the final section and that we do not need to ensure that the initial and final sections of one transaction are serialized next to each other.

% 11/1/2021
\textbf{Discussion.}
To have better performance characteristics, MS-IA presents a more complex programming abstraction than MS-SR because it places the burden of coordination (invariance checking, reconciliation, and apologies) on the programmer.
In MS-IA, transactions are written as guesses (in the initial section) and apologies (in the final section). Furthermore, apologies are merge functions that aim to reconcile the inconsistencies caused by incorrect triggers or inputs. Given our apply-then-check pattern, it is possible that some operations cannot be merged. In such cases the final section would undo the effects of the initial section---and any transactions dependent on it. 
We envision that this pattern of multi-stage guesses and apologies can incorporate advances in merge operators that would allow minimizing the need for undoing transactions. For example, programmers may use merge-able operations in the initial sections and delaying other operations to the final section. This can benefit from---and help empower---the literature of conflict-free and compositional data types. These can be adapted to the initial-final pattern by making merge-able parts in the initial section and enabling other types of operations in the final section.

 \color{black} In Validation-based (optimistic) protocols, which operate in the context of a single transaction, before validation, the outcome of the transaction is not returned to the client and is not exposed to other transactions. Applying validation-based protocol as they are in the edge-cloud setting would be prohibitive because it means that a transaction would not commit until the validation step - that would happen after cloud processing - is ready. The MS-IA pattern, on the other hand, divides the transaction logic to two sections each acting as an independent transaction, where the first one commits before the second section starts, which allows returning responses to clients and exposing the outcome to other transactions (even before the final section and without having to wait for the processing at the cloud).\color{black}

{\color{black}
\subsection{Multi-Partition Operations}
The transaction processing protocols presented in this section focus on transactions that are local to a partition. In the case of distributed transactions (spanning multiple partitions), the presented algorithms need to be extended. In particular, in the multi-partition case, the data objects that are accessed by a transaction (whether in the initial or final sections) can be in multiple partitions. Locking data objects in remote partitions will be performed by sending the lock requests to the remote edge node that is responsible for the partition. The second difference is that after the transaction finishes, the partitions engage in a two-phase commit protocol to ensure that the distributed commit is performed in an atomic way. This atomic commitment step is performed in the following cases: (1)~for MS-SR, it is performed at the end of the final section, (2)~for MS-IA, it is performed at the end of both the initial and final sections. The reason for not performing this step at the end of the initial section in MS-SR is that the locks are not released until the end of the corresponding final section.
}

\begin{figure*}
\centering
\includegraphics[scale=0.45]{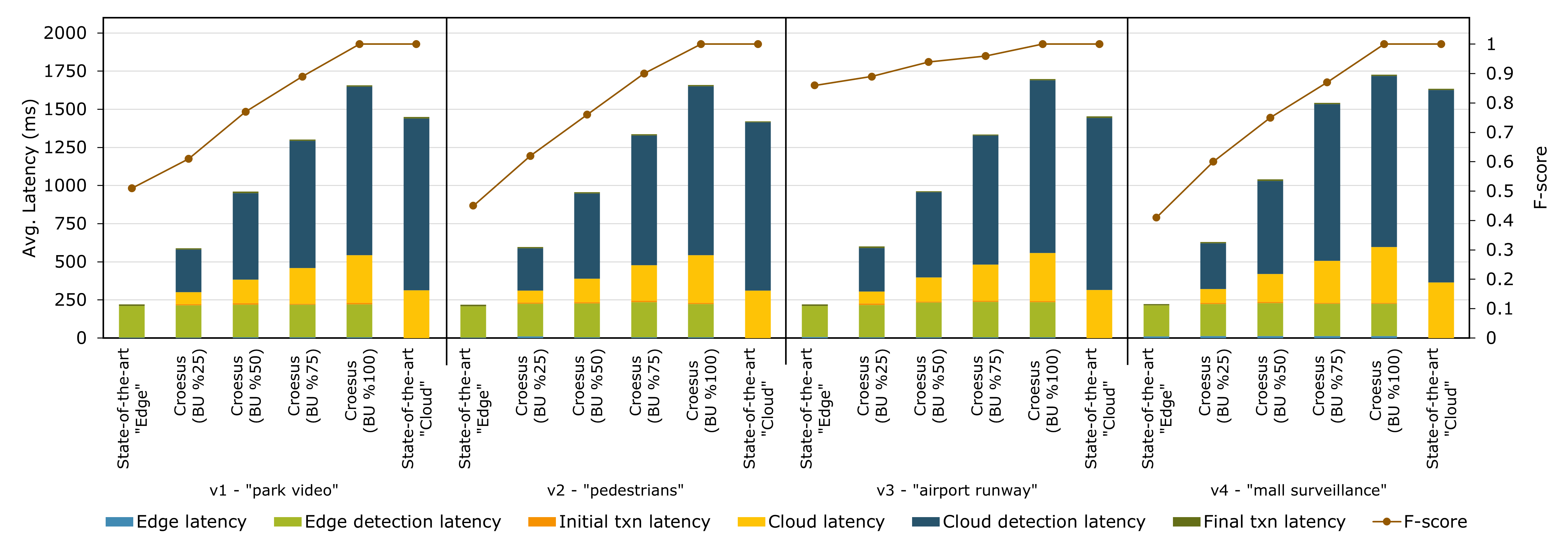}
\caption{Croesus vs. state of the art baselines: Latency and F-score of running Croesus over four videos. Some values are minute and are hard to show on the figure.}
\label{fig:CroesusBaseline}
\end{figure*}

\section{Evaluation}
\label{sec:eval}

%We compare Croesus with the traditional state-of-the-art variations of edge-cloud video analytics; we refer to these two variations as the “State-of-the-art edge” baseline and the “State-of-the-art cloud” baseline. Traditionally, video analytics applications fall into two categories: those that prioritize detection accuracy, and those that prioritize performance and minimizing latency. Applications that prioritize accuracy require computationally expensive deep learning and computer vision models to ensure high detection accuracy. This results in high detection accuracy but increases the transaction's processing latency since it depends on the detection results. On the other hand, applications that prioritize performance and reducing latency rely on compact deep learning models that guarantee swift transaction processing but undermine the accuracy. 

In this section, we show how Croesus manages the trade-off between performance and accuracy of two models with different characteristics: (1)~YOLOv3~\cite{yolo9000,yolov3} as the cloud model, which is reported to achieve 45~FPS on high-end hardware and achieves high accuracy. (2)~Tiny YOLOv3~\cite{yolov3,yolo9000}---which is a compact version of YOLOv3---for the edge model. Tiny YOLOv3 is faster but less accurate than YOLOv3~\cite{yolov3}.
%
%Our evaluation aims to show how Croesus can manage the trade-off between two models with different performance-accuracy characteristics.
%Although we choose YOLOv3 and Tiny YOLOv3, o
%Our method is applicable to different models with different performance-accuracy characteristics, that can be faster than Tiny YOLOv3 or more accurate than YOLOv3. Croesus does not aim to achieve better accuracy or performance compared to prior work, rather it shows that it use two different models to provide an illusion of high performance (using a faster model) and high accuracy (using a more accurate model.)

We compare Croesus with two baselines: 
        •	State-of-the-art edge baseline: this baseline represents a performance-centric video analytics applications where a compact model (Tiny YOLOv3) is deployed on the edge machine for lower latency.
        •	State-of-the-art cloud baseline: this baseline represents accuracy-centric video analytics applications where a computationally expensive model (YOLOv3) is deployed on a resourceful cloud machine for better accuracy.

\begin{figure}[!t]
\centering
\includegraphics[scale=0.45]{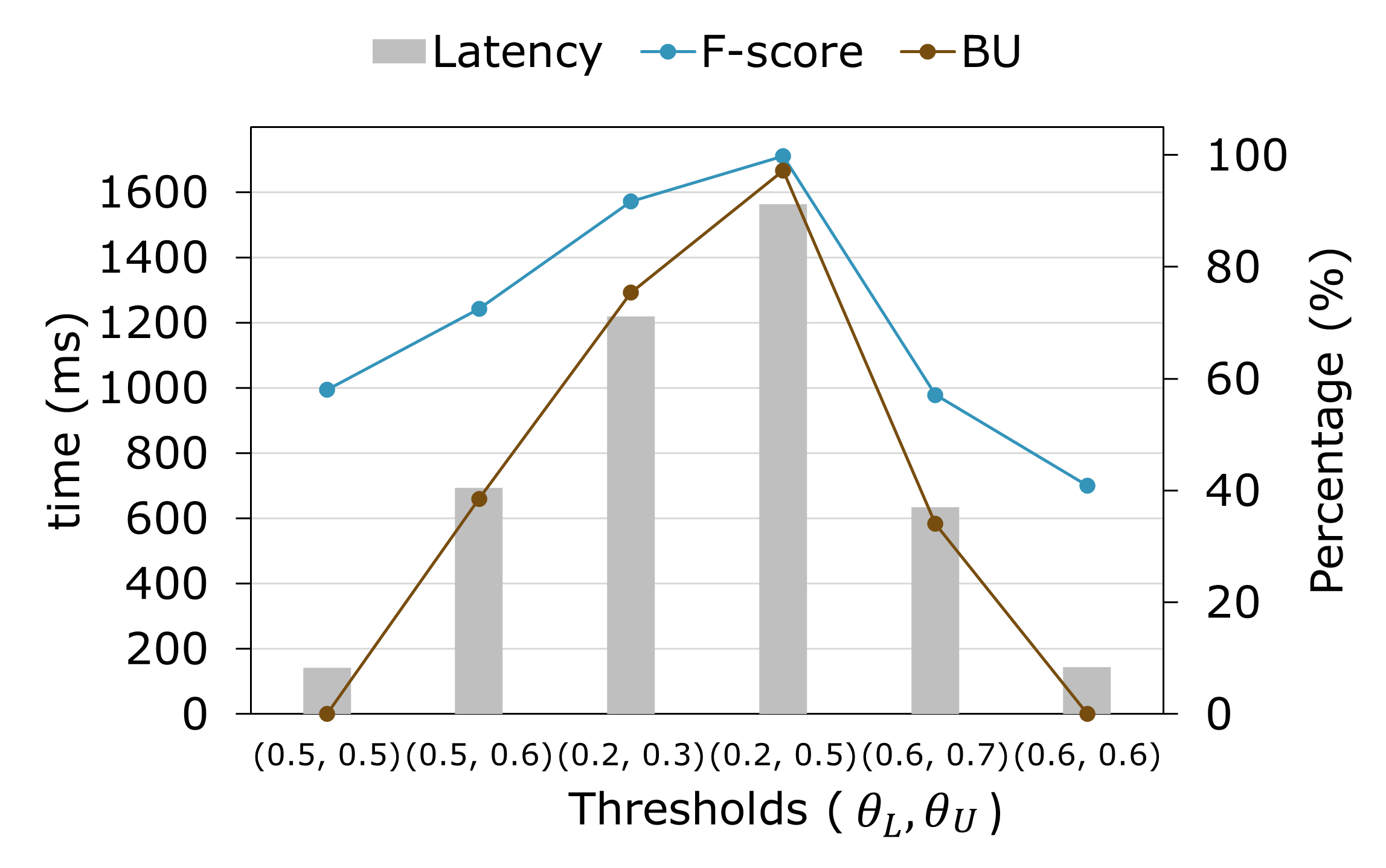}

\caption{Croesus latency vs. accuracy for different pairs of thresholds %magnifying the dissociation between the resulting bandwidth utilization and the accuracy.
}
\label{fig:CroesusThresh}
\vspace{-0.15in}
\end{figure}

\begin{figure*}
\includegraphics[scale=.63]{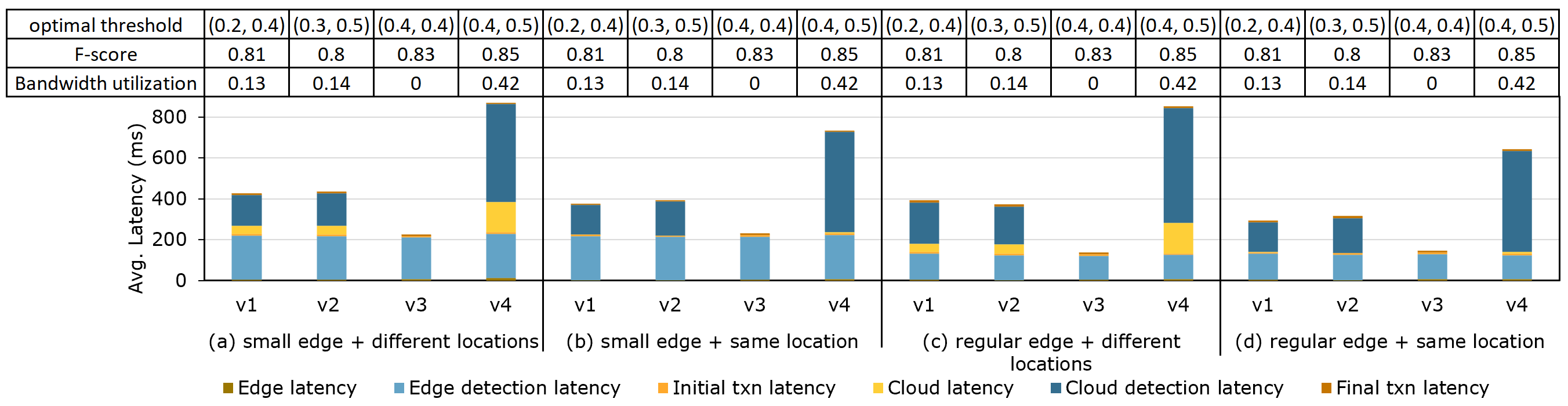}

\caption{Latency in different setups for the optimal case that was dynamically configured by Croesus.}
\label{fig:CroesusSetups}
\end{figure*}

\begin{table}[]
\caption{Comparison between state-of-the-art edge and cloud and optimal threshold Croesus}
\label{tab:my-table}
\begin{center}
\begin{tabular}{c|c|c|c|c|c|c|}
\cline{2-7}
                         & \multicolumn{3}{c|}{Accuracy} & \multicolumn{3}{c|}{Latency (ms)}                                            \\ \cline{2-7} 
                         & Croesus   & Edge    & Cloud   & Croesus                                                   & Edge   & Cloud   \\ \hline
\multicolumn{1}{|c|}{v1} & 0.81x     & 0.5x    & 1       & \begin{tabular}[c]{@{}c@{}}427.02\\ (226.16)\end{tabular} & 210.74 & 1452.5  \\ \hline
\multicolumn{1}{|c|}{v2} & 0.8x      & 0.45x   & 1       & \begin{tabular}[c]{@{}c@{}}434.81\\ (224.41)\end{tabular} & 207.97 & 1427.69 \\ \hline
\multicolumn{1}{|c|}{v3} & 0.83x     & 0.86x   & 1       & \begin{tabular}[c]{@{}c@{}}225.63\\ (218.17)\end{tabular} & 211.19 & 1455.66 \\ \hline
\multicolumn{1}{|c|}{v4} & 0.85x     & 0.41x   & 1       & \begin{tabular}[c]{@{}c@{}}863.96\\ (235.02)\end{tabular} & 214.65 & 1638.89 \\ \hline
\end{tabular}
\end{center}
\end{table}

\cut{
\begin{table}[h!]

\begin{center}
\fontsize{8pt}
\resizebox{\columnwidth}{}{%
\begin{tabular}{l|c|c|c|c|c|c|}
\cline{2-7}
                         & \multicolumn{3}{c|}{\textbf{Accuracy}} & \multicolumn{3}{c|}{\textbf{Latency (ms)}}                                            \\ \cline{2-7} 
                         & \textbf{\textit{Croesus}}   & \textbf{\textit{Edge}}    & \textbf{\textit{Cloud}}  & \textbf{\textit{Croesus}}                                              & \textbf{\textit{Edge}}   & \textbf{\textit{Cloud}}   \\ \hline
\multicolumn{1}{|l|}{v1} & 0.81x     & 0.5x    & 1      & \begin{tabular}[c]{@{}c@{}}427.02 \\ (226.16)\end{tabular} & 210.74 & 1452.5 \\ \hline
\multicolumn{1}{|l|}{v2} & 0.8x      & 0.45x   & 1      & \begin{tabular}[c]{@{}c@{}}434.81\\ (224.41)\end{tabular}  & 207.97 & 1427.69 \\ \hline
\multicolumn{1}{|l|}{v3} & 0.83x & 0.86x & 1 & \begin{tabular}[c]{@{}c@{}}225.63\\ (218.17)\end{tabular} & 211.19 & 1455.66 \\ \hline
\multicolumn{1}{|l|}{v4} & 0.85x     & 0.41x   & 1      & \begin{tabular}[c]{@{}c@{}}863.96\\ (235.02)\end{tabular}  & 214.65 & 1638.89  \\ \hline
\end{tabular}%
}
\end{center}
\end{table}
}

\subsection{Experimental setup}

Our evaluations are performed on Amazon's AWS EC2 services. 
%The experiments are carried out on different setups with different types of machines and locations. 
Edge machines are implemented on either t3a.xlarge instances (for the default setups) and t3a.small (for experiments with limited resources). t3a.small machines have 2 virtual CPUs and 2GiB of memory and t3a.xlarge machines have 4 virtual CPUs and 16GiB of memory. Machine locations are either in California or Virginia. The default setup is of an edge machine in California and a cloud machine in Virginia.%All experiments run on Ubuntu Linux v20.04.
We implement a prototype of Croesus in Python. 
%We use the CNN model YOLOv3 as the cloud model and tiny YOLOv3---which is a compact version of YOLOv3---for the edge model~\cite{yolov3}. YOLOv3 achieves up to 45~FPS on high-end hardware, while tiny YOLOv3 can process up to 244~FPS, more than 4.5x as much as YOLOv3 can process~\cite{yolov3}. Tiny YOLOv3 is usually used in constrained environments where a light-weight model or faster inference is needed. 
In addition to model detection, the edge node maintains a data store and {\color{black}processes transactions according to the MS-IA algorithm.} Transactions are constructed by randomly selecting keys to read or write to the database in response to detected labels. 
%We also implement the optimization functions that are used to search and select the thresholds for bandwidth thresholding given an input video. 

 % look here
%We evaluate Croesus against the baselines defined in the previous section through four setups (fig \ref{fig:CroesusSetups}): (a) Small edge, different locations: Edge machines are of type t3a.small while cloud machines are of type t3a.xlarge. Edge machine are located in California and cloud machines are in Virginia. (b) small edge, same location: Small edge, different locations: Edge machines are of type t3a.small while cloud machines are of type t3a.xlarge.Edge and cloud machines are physically located in the same location. (c) Regular edge, different location: Edge and cloud machines are both of type t3a.xlarge. Edge machine are located in California and cloud machines are in Virginia. (d) Regular edge, same location: Edge and cloud machines are physically located in the same location and are both of type t3a.xlarge.

%In addition to these setups, we run experiments that demonstrate the dynamic allocation of the thresholds (fig \ref{fig:heatmaps}). This is a preprocessing step where a subset of the data is used to determine the best thresholds to optimize Croesus performance while maintaining accuracy.

We evaluate accuracy and performance as follows: Accuracy is measured as the F-score. Performance is measured in two ways: (1) Latency, which we define as the time required to commit transactions in the system. (2) Edge-Cloud Bandwidth Utilization (BU), which we define as the ratio of frames being sent to the cloud relative to all processed frames. {\color{black}This metric is proportional to the number of corrections that need to be made in the final transaction.} We consider the YOLOv3 output to be the ground truth and we use it to compare Creosus' results and calculate the F-score. When the overlap between the truth boundaries and the predicted boundaries is more than \%10, we consider the prediction correct. {\color{black}The calculation of the F-Score does not depend on the percentage of frames that are sent or not sent to the cloud, but rather on the accuracy of the detection from the perspective of the client (i.e., the accuracy of the detection \emph{and} apologies, if any.) There is, however, a correlation between sending more frames to the cloud as it means that more errors are corrected by the more accurate cloud model.}
%We vary the $\mu$ value (the desired minimum accuracy as defined previously in the bandwidth thresholding subsection) to show its effect when changed (Fig.~\ref{fig:heatmaps}). 
%We use two approaches to implement this preprocessing optimization step as shown in Equation~\ref{eqn:Theta}: (1) brute force; (2) gradient approach. In the brute force approach, we consider all possible pairs of the optimization thresholds $\theta_L$ and $\theta_U$ and set $\mu$ in \ref{eqn:T} to the desired minimum accuracy. From there, the optimization framework finds the pair with the lowest bandwidth utilization following equation \ref{eqn:Theta}. 
 
%In the gradient approach, we start the search for the local optimum pair from a randomly selected pair. We then examine all adjacent pairs on the grid following equations \ref{eqn:T} and \ref{eqn:Theta} to determine the next move. When there is no longer a better neighboring point, the search stops and we consider the current point to be the local optimum optimization pair. 

%\cite{github-traffic}, \cite{github-mall}
Experiments run on a subset of five types of videos: Street traffic (vehicles), street traffic (pedestrians), mall surveillance (all three querying for 'person'), airport runway querying for 'airplane', and home video of pet in the park querying for 'dog'. Each detection acquired for each frame triggers a transaction that has 6 operations, half of these mutate the state of the database by inserting data items, and the other half read from previously added items. This mimics a write-heavy workload of YCSB (Workload A)~\cite{Cooper2010}. {\color{black}Unless we mention otherwise, we use MS-IA as the consistency guarantee.}

\vspace{-0.05in}

\subsection{Experimental results}
\subsubsection{\textbf{Performance vs. accuracy trade-off}}
% look here
Figure \ref{fig:CroesusBaseline} shows the trade-off between the latency and accuracy as BU varies on four videos: park video (v1), street traffic (v2), airport runway (v3) and mall surveillance (v4). For each video, we compare different BU configurations with the state-of-the-art edge and cloud solutions. In the figure, the stacked bars represent the latency breakdown for each experiment. Edge latency and cloud latency represent the average time needed to send a frame to the edge and to the cloud, respectively. The edge detection latency and cloud detection latency are defined as the average time it takes the tiny YOLOv3 and YOLOv3 models, respectively, to produce the detected objects list in a frame. The initial transaction and final transaction latency are very minute and hard to show in the figure, but they represent the time it takes to commit a transaction after detection is done. The F-score metric is shown as a marked line. 
%%%%%%%%%%%%%%%%%%%%%%%%%%%

{\color{black} As shown in Figure~\ref{fig:CroesusBaseline}, Croesus processes transaction updates in the initial phase (measured by edge latency and edge detection latency), up to $6.9\times$ faster than the case with full BU while maintaining high accuracy (F-score up to $\%94$ in the case of "airport runway") by utilizing the cloud corrections and final transaction. } The client observes two latencies: the first is the real-time initial processing at the edge which corresponds to edge latency, edge detection latency, and initial transactions latency. The second is for the final processing after corrections, if any, from the cloud, which corresponds to all the latency types shown in the figure. As BU increases, the amount of frames sent to the cloud, and consequently the average cloud-related latencies, increases. When BU is 100\%, the total cloud latency for Croesus becomes even higher than state-of-the-art cloud because it incurs all the overheads of the state-of-the-art cloud in addition to the overhead of Croesus methods. 

%The source of confusion in the sentence is in describing the improvement in f-score and what we were referring to when comparing latency. For f-score, we should emphasize that up to 94\% is the case for the "airport runway", but others achieve different scores. 

%The other source of vagueness is that we only refer to the response latency from the edge (the sum of edge latency and edge detection latency in fig. 3). We realize the confusion because the figure shows the sum of edge and cloud latencies. When comparing the response latency from edge to the cloud latency, the improvement is 6.9x. What we aim to achieve with this sentence is to say that Croesus is able to achieve a low response latency (from the edge) while enabling a high f-score after corrections.
%%%%%%%%%%%%%%%%%%%%%%%%%%%
The trend of increasing Croesus cloud latency as BU increases is observed in videos 1, 2, and 4. However, a unique trend appears for video 3 (querying for ‘airplane’ on the airport runway video). In this video, the state-of-the-art edge produce high accuracy due to the nature of the video (an object that is detected by the edge model with high confidence). This asserts the need for dynamic optimization over the detection thresholds for different applications in order to address workload differences. Croesus’ dynamic optimization ensures the best balance of the trade-off between accuracy and latency depending on the needs of each application. 
%end

% look here

Figure \ref{fig:CroesusThresh} demonstrates the effect of choosing different thresholds on the latency in Croesus.  {\color{black}We demonstrate the results using the street traffic video querying for vehicles.} It shows the total Croesus cloud latency and the BU percentage as the threshold pairs for detections are varied. For example, a threshold pair (0.5, 0.6) means that only detections with confidence values in the edge mode that are within these two values are sent to the cloud for verification. Detections with lower confidence values are discarded and ones with higher confidence values are assumed correct by the edge node and are not verified (however, erroneous detections are still accounted for in the F-score.)

When the thresholds are set to (0.5, 0.5) the resulting BU is $\%0$ since no frames will be sent to the cloud for validation. The resulting accuracy is comparable to the edge only baseline at $\%58$. For a threshold pair of (0.5, 0.6), the latency increases due to more results being validated in the cloud. The resulting BU is $\%38.5$ while the F-score increases by $\%25$. When the BU reaches $\%97.2$, the accuracy reaches $\%99.8$. 
For thresholds (0.6,0.7), the BU is only $\%4$ lower than the BU of the thresholds (0.5, 0.6). However, the F-score decreases by more than $\%21.24$. This shows that although two pairs may have similar BU values, their corresponding F-score can be significantly different. It indicates the importance of dynamically optimizing for an optimal pair of thresholds that balance the trade-off between the latency and accuracy while prioritizing thresholds that yield higher accuracy.

%In addition, it signifies the importance of choosing the threshold pair dynamically based on a sample of the input. Specifically due to the possibility that two pairs of thresholds result in comparable latency but very distinct accuracy. When the thresholds are set to (0.5, 0.5) the resulting BU is $\%0$ since no frames will be sent to the cloud for validation. The resulting accuracy is comparable to the edge only baseline at $\%58$. For a threshold pair of (0.5, 0.6), the latency increases due to more results being validated in the cloud. The resulting BU is $\%38.5$ while the F-score increases by $\%25$. When the BU reaches $\%97.2$, the accuracy reaches $\%99.8$. 

\begin{table}[]
\caption{The effect of the cloud model size.}
\centering
\label{tab:model-size}
\begin{tabular}{|l|l|l|l|l|}
\hline
\textbf{\begin{tabular}[c]{@{}l@{}}Croesus\\ cloud model\end{tabular}} & \textbf{\begin{tabular}[c]{@{}l@{}}Optimal\\ threshold\end{tabular}} & \textbf{\begin{tabular}[c]{@{}l@{}}Croesus\\ F-score\end{tabular}} & \textbf{\begin{tabular}[c]{@{}l@{}}Bandwidth\\ Utilization\end{tabular}} & \textbf{\begin{tabular}[c]{@{}l@{}}Detection\\ latency (sec)\end{tabular}} \\ \hline
YOLOv3-320                                                             & (0.2, 0.3)                                                           & 0.84                                                               & 0.61                                                                     & 0.70                                                                       \\ \hline
YOLOv3-416                                                             & (0.4, 0.5)                                                           & 0.86                                                               & 0.44                                                                     & 1.12                                                                       \\ \hline
YOLOv3-608                                                             & (0.4, 0.6)                                                           & 0.83                                                               & 0.58                                                                     & 2.34                                                                       \\ \hline
\end{tabular}
\end{table}

Another observation from Figure~\ref{fig:CroesusThresh} is that the rate at which the bandwidth utilization increases is faster than the rate of F-score increase over different threshold pairs. This is an indicator that increasing dependence on the cloud does not necessarily improve accuracy dramatically. 

\begin{figure*}
\centering
\includegraphics[scale=0.7]{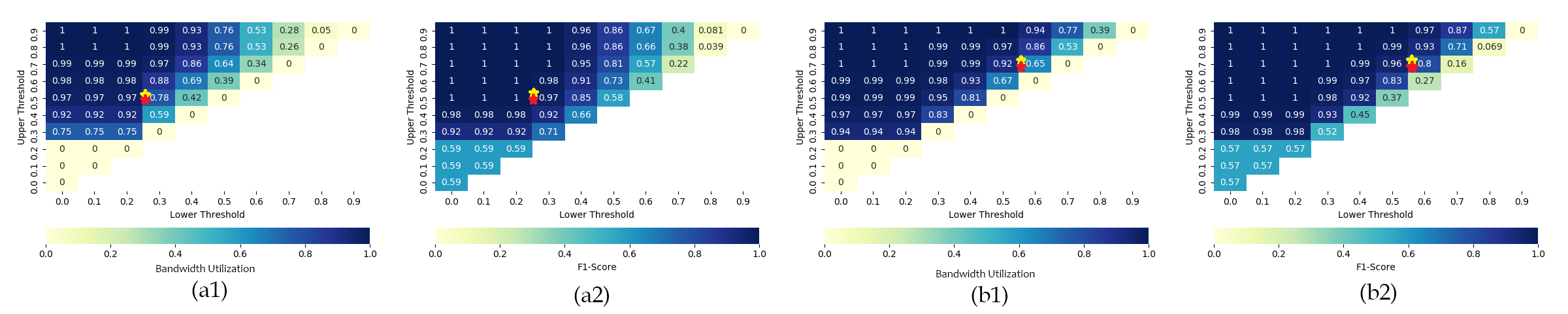}

\caption{Croesus bandwidth utilization vs. accuracy based on the threshold pair choice. a) traffic video querying "person" ($\mu = 0.90$) and b) mall surveillance querying "person" ($\mu = 0.80$). For all pairs of lower threshold $(\theta_L)$ and upper threshold $(\theta_U)$. Dynamically chosen
pair: yellow star using brute force, red star using gradient step.}
\label{fig:heatmaps}
\end{figure*}

{\color{black} The effect of changing the cloud model size in Croesus is demonstrated in Table~\ref{tab:model-size}. 
In this experiment, we set $\mu = 0.8$ and compare the performance of Croesus while using three different cloud model sizes: YOLOv3-320, YOLOv3-416, YOLOv3-608, where the number at the end of each model's name represents the width and height used in the neural network model. Therefore, a larger number indicates a larger model. 
As the cloud model size gets larger, the detection latency gets larger as well. This is the main impact of utilizing different model sizes. The different models have different accuracy characteristics as well. However, using them in the Croesus framework does not demonstrate such differences in the resulting F-score and BU. This is because the optimal thresholds are set based on the used cloud model to achieve the desired minimum accuracy, $\mu$. }
%The latency increase as the size of the model increases. However, utilizing the optimal threshold (explained in subsection 3) limits the effect on accuracy as that metric is controlled by the optimization parameter $\mu$.}
%Choosing the best threshold will result in the best balance of the trade-off. The optimal threshold pair is different depending on the input video. This is why we introduce the preprocessing optimization where the best threshold is chosen.

\subsubsection{\textbf{Optimal threshold performance on different setups}}
Figure \ref{fig:CroesusSetups} shows the accuracy and performance results of Croesus for different videos when using the optimal threshold. These experiments run across four different setups: (a) Small edge, different locations: Edge machines are of type t3a.small while cloud machines are of type t3a.xlarge. Edge machine are located in California and cloud machines are in Virginia. (b) small edge, same location: Small edge, different locations: Edge machines are of type t3a.small while cloud machines are of type t3a.xlarge.Edge and cloud machines are physically located in the same location. (c) Regular edge, different location: Edge and cloud machines are both of type t3a.xlarge. Edge machine are located in California and cloud machines are in Virginia. (d) Regular edge, same location: Edge and cloud machines are physically located in the same location and are both of type t3a.xlarge.

This figure demonstrates the improvement in latency that the optimal thresholds provide compared with the performance shown in Figure \ref{fig:CroesusBaseline} {\color{black}(For a clearer presentation, we show the comparison numbers in Table~\ref{tab:my-table}, where the number inside the parentheses in Croesus is the latency of the initial transaction.)}. Also, it shows the effect of resource allocation and geographical location on performance, and the importance of dynamic threshold optimization to address the differences in applications. 

In the case of applying the optimal thresholds, we see improvement in {\color{black}the final} latency over the state-of-the-art cloud implementation by up to $\%85$ (but as low as $\%47$ for the case of v4). In addition, committing the initial transaction is always comparable to the state-of-the-art edge solutions. Even though the final transaction in Croesus can take up to $\%75$ more than the edge only implementation, the accuracy improvements is significant and can justify the slight delay after the initial transaction. %In video v1 the accuracy of optimal Croesus increased by $\%30$ over the accuracy of state-of-the-art edge. 

In addition, the F-score of optimal Croesus is 2.1x higher than the F-score of edge-only in video v4. In the case of video v3, the accuracy is comparable to the state-of-the-art accuracy because the optimal thresholds represent a near $\%0$ bandwidth utilization. This is possible in application where objects are expected to be easier to detect in each frame. 
The figure also shows that as the geographical distance between the edge and the cloud decreases (when placed in the same location), Croesus performance improves. In addition, the performance improves when edge resources are maximized.

\begin{figure*}
\centering
\includegraphics[scale=0.4]{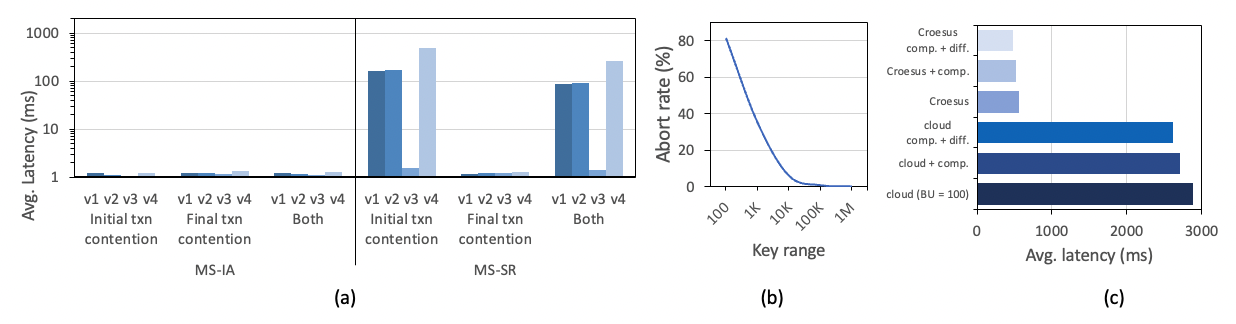}
\caption{\color{black}(a) Comparing lock contention of MS-SR and MS-IA measured as the average latency of holding locks. (b) Abort rate of MS-SR transactions. (c) Hybrid system techniques.}
\label{fig:contention + compress}
\end{figure*}

%end
\subsubsection{\textbf{Dynamic preprocessing optimization}}
Figure \ref{fig:heatmaps} shows the bandwidth utilization and accuracy as we vary the optimization thresholds (the lower threshold $\theta_L$ and upper threshold $\theta_U$). The heatmaps illustrate the gradual shift in the balance between bandwidth utilization and accuracy.

\textbf{bandwidth utilization/accuracy trade-off.}
Figure \ref{fig:heatmaps}(a1) for BU and Figure \ref{fig:heatmaps}(a2) for F1-Score show the trend where increasing the lower threshold and the gap between the two optimization thresholds results in a higher throughput. For example, when the optimization pair is (0.2, 0.4), the F-score is $\%98$ since this pair of thresholds result in a high BU at$\%92$. However, when the optimization pair is (0.3, 0.4) the bandwidth utilization drops to $\%59$ while the F-score remains relatively high $\%92$. We are able to conserve the edge-cloud communication by more than \%35.9 while maintaining relatively high accuracy.

Figures \ref{fig:heatmaps}(b1) for BU and \ref{fig:heatmaps}(b2) for F1-Score show the same trends as the previous set of heatmaps. However, we notice a sudden jump in bandwidth utilization and F-score results. This is due to the quality of this second video where objects are smaller and not as clear as the first video. In this case, utilizing edge-cloud communication increases the quality of detections dramatically compared to edge detections. For example, for the optimization pair (0.4,0.5) \%81 of frames are sent to the cloud and the F-score is \%92. However, when the optimization pair is (0.4,0.4) no frames are sent to the cloud and the F-score decreases to \%45. 

{\color{black} \textbf{Dynamically finding the optimal solution.} We implemented two approaches to acquire the optimized pair of thresholds. The first is a brute force method that evaluates the whole space of threshold pairs. In it, we obtain the optimal pair for balancing the trade-off (shown as a yellow star). The second approach uses a gradient step with our optimization formulation. Using gradient step is 2.2x times faster (shown as a red star). In both cases, bandwidth utilization is $<\%78$, accuracy is at least $\%49$ higher than an edge model.}

\subsubsection{\textbf{\color{black}Comparing MS-SR and MS-IA}}
{\color{black}
In the next set of experiments, we measure the performance differences between the two proposed consistency levels: MS-SR and MS-IA. (In this set of experiments we use video v4 with the query ``person''.) The main difference between the two consistency levels is that the locks in the initial section of MS-SR are held until the end of the whole transaction, whereas in MS-IA, the locks are released after the initial section. This results in increasing the lock contention in MS-SR. Figure~\ref{fig:contention + compress}(a) shows the difference in contention by measuring the average time locks are held in MS-SR and MS-IA (denoted average latency in the figure.) While the average latency of MS-IA is in the order of milliseconds, the average latency of initial sections in MS-SR is in the order of hundreds of milliseconds. This is because the locks are not released until the final section is performed which means that the locks are held while the frame is being processed using the cloud model which takes a significant amount of time.

The contention difference leads to a high likelihood of aborts in MS-SR. Figure~\ref{fig:contention + compress}(b) shows the abort rate of transactions in MS-SR while emulating a high contention scenario of hot sports with different sizes. The x-axis (key range) is the key range of the hot spot that the transactions are trying to access. In this model, transactions are executed in batches of 50 transactions per batch where each transaction has 5 update operations. The figure shows that the abort rate can be significant when the hot spot has a size that is less than 10K keys. This demonstrates the benefit of using MS-IA to overcome the hot spot contention problems while using MS-SR.
The figure does not show the abort rate of MS-IA transactions as the rate is 0\% for all cases. This is because our implementation uses a single-threaded sequencer to order transactions in batches so that conflicting transactions do not overlap. This is possible as the transactions do not have to hold locks for prolonged durations.
}
\subsubsection{{\color{black}Hybrid edge-cloud techniques}}
\label{subsub:hybrid}
{\color{black}
Hybrid edge-cloud techniques have been proposed to process object detection models~\cite{glimpse, collaborativeedge, noscope, neurosurgeon}. These techniques generally work by performing some pre-processing steps at the edge node before sending the frame to be detected at the cloud. We compare with two such techniques that were utilized in various forms in prior work~\cite{collaborativeedge,noscope}: (1)~\emph{compression} in which the frame is compressed before sending it to reduce the communication bandwidth and latency, and (2)~\emph{difference communication} in which only the difference between the current frame and a reference frame is sent to the cloud. These techniques, if implemented in isolation, would achieve a small improvement over the performance of the state-of-the-art cloud baseline that we compared with as they would still require sending all frames for detection in the cloud. We show this in the evaluations on the park video v1 with the larger cloud model (YOLOv3-608) in Figure~\ref{fig:contention + compress}(c) under cloud+compression and cloud+compression+difference. These evaluations apply the hybrid techniques which improves the latency as less data need to be sent. However, this is a small improvement because the latency is dominated by the detection latency at the cloud. 

An alternative view of these techniques is as methods to augment with edge-cloud Croesus. Figure~\ref{fig:contention + compress}(c) also shows how augmenting compression can improve the final commit latency in Croesus (under Croesus+compression and Croesus+compression+difference). The improvement is small because the model detection latency in the cloud is the dominant latency (as we show in previous evaluations.)
}

\section{Related Work}
\label{sec:related}

%Our edge-cloud processing and multi-stage transaction model bear resemblance to work in other domains. 
%
The requirement of real-time processing has been tackled by real-time Databases (RTDB)~\cite{son1996improving} that aim to process data in predictable short time. Our method differs by allowing to manage the trade-off of performance and accuracy and providing the illusion of both a fast and accurate processing.
%%%
A hybrid edge-cloud model (and similar caching-based models) have recently been used~\cite{glimpse, collaborativeedge, noscope, neurosurgeon} to take advantage of cloud computing to process data on neural networks, as well as leveraging resources at the edge. Our work extends these efforts by providing a multi-stage transactional model that enables programmers to reason about this hybrid edge-cloud model. {\color{black} In particular, these hybrid edge-cloud models can be augmented with the edge-cloud model of Croesus to improve the edge-to-cloud latency. However, when hybrid edge-cloud models are used in isolation, they would incur the high costs of edge-to-cloud communication for all frames since they require performing the detection in the cloud.}

The multi-stage transaction model differs from existing abstractions in that each transaction is split into two asymmetrical sections. This makes traditional consistency models~\cite{bernstein1987concurrency} unsuitable for multi-stage transactions. The pattern of initial-final sections resemble work on eventual consistency~\cite{bailis2013eventual} and Transaction Chains~\cite{zhang2013transaction} but differs in one main way: the inconsistencies in the multi-stage model are external to the database. They are caused by erroneous inputs or triggers. In eventual consistency and Transaction Chains, inconsistency is caused by concurrent operation across different copies. This leads to similarities and differences, which led us to adapt prior relevant literature. 
Multi-stage transactions resemble work on long-lived transactions (LLT) as well, such as Sagas~\cite{sagas}. Multi-stage transactions can be viewed as a special case of LLT's---with a transaction and a follow-up correction/compensation transaction---which enables simpler and more efficient solutions. 

\color{black}We view Croesus as a data layer solution that builds on top of asymmetric environments which - like edge-cloud - may include the lambda architecture~\cite{7364082} with both batch processing (slower but more accurate) and speed/real-time processing (faster but less-accurate). The contributions of Croesus can be applied to the lambda environment~\cite{warren2015big} by using multi-stage transactions (where the initial section is processed after real-time processing and the final section is processed after batch processing), and thus provide Croesus benefits to lambda programmers.\color{black}
%Specifically, the initial-final pattern invites the use and combination of the concepts of apologies~\cite{DBLP:conf/cidr/HellandC09} and invariant confluence~\cite{bailis2014coordination}. 

\section{Conclusion}
\label{sec:conclusion}
    We presented Croesus, a multi-stage processing system for video analytics and a multi-stage transaction model which optimizes the trade-off between performance and accuracy. We present two variants of transnational consistency for multi-stage transactions---multi-stage serializability and multi-stage invariant confluence with apologies. Our evaluation demonstrates that multi-stage processing is capable of managing the accuracy-performance trade-off and that this model provides both immediate real-time responses and high accuracy.
    
    Although we have presented the concept of multi-stage processing and transactions in the context of edge-cloud video analytics and processing~\cite{nawab2021wedgechain,nawab2018dpaxos,nawab2018nomadic,gazzaz2019collaborative,mittal2021coolsm}, these concepts are relevant to many problems that share the pattern of needing immediate response and complex processing. Our future work explores these applications. One area of future work is to apply this pattern of multi-stage processing to blockchain systems with off-chain components~\cite{abadi2020anylog,alaslani2019blockchain,nawab2019blockplane}. In such a case, the first stage is performed in the off-chain component while the final stage is performed after validation from the blockchain. Another area we plan to explore is to integrate the multi-stage processing structure with global-scale edge placement and reconfiguration~\cite{zakhary2018global,zakhary2016db}. This will allow utilizing multi-stage processing more efficiently by controlling where the stages are performed and what edge/cloud datacenters to utilize.
    
\section{Acknowledgement}
This research is supported in part by the NSF under grant CNS-1815212.
\bibliographystyle{IEEEtran}
\bibliography{references}

\end{document}